\documentclass{llncs}
\usepackage{llncsdoc}
\usepackage{amsmath}
\usepackage{amsfonts}
\usepackage{amssymb}
\usepackage{graphicx}

\usepackage{authblk}
\usepackage{tkz-euclide}
\usepackage{tikz}
\usepackage{subfigure}
\usepackage{balance}  
\usepackage[linesnumbered,ruled]{algorithm2e}
\usepackage{multirow}
\usepackage{booktabs}
\usepackage{url}

\usepackage{footmisc}
\DefineFNsymbols{mySymbols}{{\ensuremath\dagger}{\ensuremath\ddagger}\S\P
   *{**}{\ensuremath{\dagger\dagger}}{\ensuremath{\ddagger\ddagger}}}
\setfnsymbol{mySymbols}

\begin{document}
\markboth{\LaTeXe{} Class for Lecture Notes in Computer
Science}{\LaTeXe{} Class for Lecture Notes in Computer Science}
\thispagestyle{empty}
\title{Ranking Object under Team Context}
\author{Xiaolu Lu\inst{1} \and  Dongxu Li\inst{1} \and Xiang Li\thanks{Xiang Li is the corresponding author.}\inst{1} \and Ling Feng\inst{2}}
\institute{School of Software, Nanjing University
\email{\{mf1232050,mf1332027,lx\}@software.nju.edu.cn}
 \and
Department of Computer Science and Technology, Tsinghua University
\email{fengling@tsinghua.edu.cn}}
\maketitle
\begin{abstract}
Context-aware database has drawn increasing attention from both industry and academia recently by taking users'
current situation and environment into consideration. However, most of the literature focus on individual context,
overlooking the team users. In this paper, we investigate how to integrate team context into database query process to
help the users' get top-ranked database tuples and make the team more competitive.  We review naive and propose an optimized query
algorithm to select the suitable records and show that they output the same results while the latter is more computational efficient.
Extensive empirical studies are conducted to evaluate the query approaches and demonstrate their effectiveness and efficiency.
\end{abstract}

\section{Introduction}
Millions of users take portable devices in the palm of their hands. It leads to the rapid development of context-aware database whose users have great expectations of getting suitable query results based on their ambient environment. At the same time, context-aware query has been widely explored to tackle with the many-answers problem to get rid of overwhelming information. Essentially, these applications keep context information to predict users' preferences. Researches in context-aware query have mainly focused on contexts
from sensors and user profiles rather than the users' organization-level context i.e the team context.
Recently, the problem of context-aware database
query has drawn increasing attention from both industry and academia.
To cope with the problem many approaches have been proposed
and can be divided into two categories: \emph{qualitative} and \emph{quantitative}.
Qualitative approaches model the user preference as partial
order and apply logic tools to reason the user's intention \cite{DBLP:conf/dexa/BunningenFA06}.
On the other hand, quantitative approaches compute the users' satisfaction by score function \cite{DBLP:conf/eurossc/LiFZ08}.
However, most of them are based on the individual context. In \cite{DBLP:conf/icdim/LiF12a}, group context is taken but the group cannot change during the query process.
In this paper, we propose the problem of
ranking under team context(RTC) which queries
the database from a team's perspective and aims
to helping the users have a more competitive context
by ranking and replace some team component with top-ranked tuples.
For example, NBA teams are preparing the roster and select the prospective player in hope of qualify for the play-offs(finals of NBA) in the next season. To this end, we need to consider the team context in a united group to query the player database for the best player,i.e., who does the team need to acquire in the coming season to make itself a serious candidate for play-offs and which player in the current team should be included in a trade. To the best of our knowledge, this work is the first to focus on team context query while traditional context-aware methods relies on individual context. When taking the whole organization background into consideration, querying becomes more practical and convenient for company customers. Moreover, team context-aware query make it easy to get different query results from
different layers of hierarchy which meets perfectly with the innate characteristics of many contexts.
The brute force method can be quite inefficient due to excessive I/O overhead. In an effort to handle the limitation, we introduce an I/O-efficient approach RTC* based on Nearest Neighbour. RTC* calculate the exact virtual component the user need to replace with,and map it to the database space. With nearest neighbour technique, we offer the ranking of query results. We prove that RTC* can produce the same results as the brute force method.

We summarize our key contributions as follows:
\begin{itemize}
  \item We define the RTC problem.
  \item We propose the solution to the RTC problem based on NN-indexing and prove its correctness.
  \item We evaluate our algorithms by experiments in terms of effectiveness and efficiency.
\end{itemize}
The rest of the paper is organized as follows:
Section \ref{sec:Related Works} describes related work with a comparison. Section \ref{sec:Prelimilaries} defines the RTC problem and section \ref{sec:Solutions to RTC Problem} proposes our method with a review of baseline method. Section \ref{sec:Experiment} presents the experiments with analysis of the results. Section \ref{sec:ConclusionLB} is conclusion and future work.

\section{Related Work}\label{sec:Related Works}
Object ranking under team context is a kind of context-aware query processing, aiming at helping systems provide query results after understanding the real intentions behind the queries.  To be more specific, our work handled the context that has a team property with a goal of being more competitive and approaching teams with higher rank.
Researches in field of context-aware query can be roughly divided into two categories: qualitative and quantitative.
In qualitative strategies: preference over database tuples are calculated by score functions. \cite{DBLP:conf/eurossc/LiFZ08,DBLP:conf/icde/StefanidisPV07}
In quantitative strategies: logical rules are hard coded to database system to infer the users' preferences. \cite{DBLP:conf/icde/BunningenFAF07}
But group or team context is overlooked for quite a long time.
Recently, researches on group preferences have been reported. In \cite{DBLP:conf/er/StefanidisSNK12},
Stefanids \emph{et al.} generalized their previous work on hierarchical context model to tackle the needs of a group.
In \cite{DBLP:conf/icdim/LiF12a}, Li and Feng propose several methods to meet most of the people's contexts. However, all these work consider group as union of individuals or most of the members. In our work, team context is take in its entirety. The object selection based on team context is to make the team closer to its rivals. Context tackled in this paper is formed by objects from the object space, which have not been exploited.

\textit{k}-NN algorithm was one of the most widely used approaches in many fields,
first proposed by \cite{altman1992introduction} and continuously improving and refining for
specific purpose, especially in spatial databases and sensor networks. Moreover, applying \textit{k}-NN
approach in high dimensional data has raised many attentions. \cite{DBLP:conf/adc/HuCS04}
proposed a new method for performing data processing using \textit{k}-NN in high dimensional data and
provided a lower bound of distance between feature vectors. \cite{DBLP:conf/pci/KouiroukidisE11,DBLP:journals/infsof/YuCWS07}
reviewed and put forward a method with hybrid index techniques for solving so called
"curse of dimensionality" problems. The state-of-art high dimensional indexing technique iDistance proposed by  \cite{DBLP:journals/tods/JagadishOTYZ05} to enhance efficiency of existing approaches. Recently,  \cite{DBLP:conf/cikm/ZhongLTZ13} propose G-tree index for finding the \textit{k} nearest objects to the given location. \cite{DBLP:conf/bncod/SchuhWBA13} has carefully reviewed the skills in partitioning the data space by iDistance.

\section{Prelimilaries}\label{sec:Prelimilaries}
\subsection{A Motivated Example}\label{sec:a motivated example}
Consider an example in NBA. A fact is that if the games winning of one team ranks top 10 in regular seasons, it would be guaranteed to enter into play-offs. What should a team ranked 11st$\scriptsize{\sim}$20th do for entering into play-offs?

Assume a team $C$ ranked 17th in NBA wants to enter into play-offs. From the team's view, if the team could approach or even supersede one of the top 10 teams, its chance for entering into play-offs will becomes greater. We refer the team to be surpassed by current one as the target team.
 
To achieve this goal, usually one player in $C$ will be exchanged with another bought in the transaction. Which pair of players should be selected for fulfilling this goal is a challenging question needed to be answered.

Similar scenarios will also occurred in other area, such as in teams of software developers, clusters of computers etc. Motivated by those ones, Problem solved in this paper can be interpreted as rank the objects and select the ones served as the substitution of a objects in the team.

\subsection{Problem Formulation}\label{sec:problem formulation}
Given an object space $\mathbb{O}$ with \textit{n} $d$-dimensional objects. Team context(TC) in our paper is defined as a context $C$ formed by \textit{m} objects $\{O_1,O_2,..,O_m,\small \forall  O_m \in C, \small O_m \in \mathbb{O}\}$ like how teams formed in NBA. Also, define a target team $T$ of $C$ for approaching. Clearly, the contributions of each object differs according to different team contexts, like performance varies of one player in different teams in NBA. Thus, while exchanging objects, a set of exchanging parameters $\Lambda\{\lambda_1,\lambda_2,...,\lambda_m\}$ is defined for measuring the contribution of $O \in \mathbb{O}$ under current TC.
Formally, our problem is:
\begin{problem}\label{the RTC Problem}
\underline{\emph{Ranking under Team Context}(RTC Problem)} Rank the objects in $\mathbb{O}$ and determine a swap-in object which is top-ranked corresponding to a swap-out object in $C$. After performing the exchanging procedure, $C$ can approach $T$ to its best effort.
\end{problem}




\section{Solutions to RTC Problem}\label{sec:Solutions to RTC Problem}
\subsection{Modelling the Problem}\label{sec:model problem}
\subsubsection{Contributions of Objects}\label{sec:contribution of objects}
Since team context $C$ is formed by objects in $\mathbb{O}$, $C$ can also be described by contributions of its components. There exist different ways calculating the contributions of components, which are based on how different contexts are organized. In this paper, we adopt the method which means the team's ability is the accumulation of all its components, since it is the most widely used way in real scenarios. Demonstrate in \eqref{eq:summation of attribute}:
\begin{equation}\label{eq:summation of attribute}
c_i = \sum_{j=1}^m{o_{ji}}
\end{equation}
\noindent where $c_{i}$ means value on \textit{i}th dimension of $C$ and $o_{ji}$ means the value of \textit{j}th object on its \textit{i}th dimension.
\subsubsection{Contributions of Attributes}\label{sec:contributions of attriutes}
Although final ranking of the one team depends on values of all attributes, not all of them weight equally. For identifying the importance of each attribute, we adopt the Kendall's tau($\tau$) coefficients, which is a rank coefficient measuring association between two measured associations \cite{kendall1938new}.

Through calculating the association between each dimension and final ranking of team context pairwisely, an coefficient will be obtained and will be regarded as weight parameter of corresponding dimension \textit{i}, denote as $w_i$.
\subsubsection{Truncated Distance} \label{sec:distance between tc} Usually, we measure the difference between contexts(or objects) by weighted Euclid distance. However, positive distance yields to represent the overall conditions of a TC, especially the case shown in Fig.\ref{fig:subfig:truncated distance case1}.

The 2-D case in Fig.\ref{fig:subfig:truncated distance case1} depicted that $C$ exceeds $T$ in dimension $x$ while yields $T$ in dimension $y$. If we only consider measuring the distance by weighted Euclid distance, $C$ might be far away from $T$ due to abstract advantage on dimension $x$, therefore, situation such as losing strength on $x$ dimension when approaching $T$ will occur, which contradicts the team's goal. In order to preserve advantage of $C$ while approaching $T$, using truncated distance as a measurement is adopted as shown in Fig.\ref{fig:subfig:truncated distance case3}. As illustrated, if Case 1 happens, we only consider the distance between $C'$ and $T$ rather than $C$ and $T$. Another case shown in Fig.\ref{fig:subfig:truncated distance case2} is relatively simple for tackling since $C$ lags $T$ in both dimension $x$ and $y$. So distance is typical weighted Euclid distance.

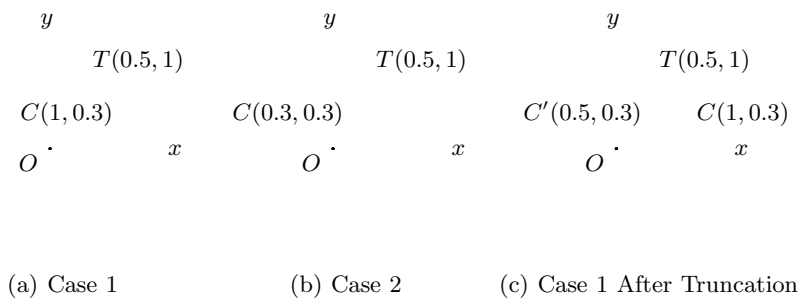
\begin{figure}
\centering
\subfigure[Case 1]
{
\begin{tikzpicture}\label{fig:subfig:truncated distance case1}
\draw[->] (-1.5,0) -- (1.5,0);
\draw[->] (0,-1.5) -- (0,1.5);
\node[right] at (1.5,0) {$x$};
\node[above] at (0,1.5) {$y$};
\node[below=5pt,left] at (0,0) {$O$};
\fill (0.5,1) circle (1.2pt) (1,0.3) circle (1.2pt);
\node[above=5pt,right] at (0.5,1) {$T(0.5,1)$};
\node[above=5pt,left] at (1,0.3) {$C(1,0.3)$};
\end{tikzpicture}
}
\subfigure[Case 2]
{
\begin{tikzpicture}\label{fig:subfig:truncated distance case2}
\draw[->] (-1.5,0) -- (1.5,0);
\draw[->] (0,-1.5) -- (0,1.5);
\node[right] at (1.5,0) {$x$};
\node[above] at (0,1.5) {$y$};
\node[below=5pt,left] at (0,0) {$O$};
\fill (0.5,1) circle (1.2pt) (0.3,0.3) circle (1.2pt);
\node[above=5pt,right] at (0.5,1) {$T(0.5,1)$};
\node[above=5pt,left] at (0.3,0.3){$C(0.3,0.3)$};
\end{tikzpicture}
}
\subfigure [Case 1 After Truncation]
{
\begin{tikzpicture}\label{fig:subfig:truncated distance case3}
\draw[->] (-1.5,0) -- (1.5,0);
\draw[->] (0,-1.5) -- (0,1.5);
\draw[densely dotted,->](1,0.3) -- (0.5,0.3);
\node[right] at (1.5,0) {$x$};
\node[above] at (0,1.5) {$y$};
\node[below=5pt,left] at (0,0) {$O$};
\fill (0.5,1) circle (1.2pt) (1,0.3) circle (1.2pt) (0.5,0.3) circle (1.2pt);
\node[above=5pt,right] at (0.5,1) {$T(0.5,1)$};
\node[above=5pt,right] at (1,0.3) {$C(1,0.3)$};
\node[above=5pt,left] at (0.5,0.3){$C'(0.5,0.3)$};
\end{tikzpicture}
}
\caption{Example of Truncated Distance}
\label{fig:truncated distance example}
\end{figure}

For a clear expression, we define a 0-1 truncating vector $\overrightarrow{TV}(tv_1,tv_2,...,tv_d)$ to describe the truncated distance. Denote $\mathrm{diff_i}$ as the difference on dimension \textit{i} and $\overrightarrow{TV}(i)$ as the \textit{i}th component of $\overrightarrow{TV}$. Truncated difference $\mathrm{\widetilde{diff_i}}$ on ith dimension is:
\begin{equation}\label{eq:truncated difference i}
\mathrm{\widetilde{diff_i}} = \mathrm{diff_i} \times \overrightarrow{TV}(i) = (t_i-c_i)\times \overrightarrow{TV}(i)
\end{equation}
Notice that dimensions where $t_i-c_i<0$  will be referred as strong dimensions of $C$, remaining ones will be referred as weak dimensions accordingly.

According to \eqref{eq:truncated difference i}, the truncated weighted Euclid distance $\widetilde{\mathrm{dis}}$ is:
\begin{equation}\label{eq:twEuclidDis}
\widetilde{\mathrm{dis}} = \sqrt{\sum_{i=1}^d{(w_i\mathrm{\widetilde{diff_i}})^2}}
\end{equation}
Distance measurements in this paper are all truncated distance. $\overrightarrow{TV}$ is referred as truncating vector henceforth. For better presentation, we denote $\mathrm{\widetilde{oDis}}$ as the truncated distance between two objects which is calculated using \eqref{eq:twEuclidDis}.
\subsubsection{Exchange Procedure}\label{sec:exchange procedure}
Define the exchange procedure as swapping $R(r_1,r_2,...,r_d)$ in $C$ with $P(p_1,p_2,..,p_d)$ in $\mathbb{O}$, thus new $\mathrm{diff_i}'$ is: 
\begin{equation}\label{eq:new diff after exchange}
\mathrm{diff_i}' = t_i-(c_i-r_i+\frac{\lambda_r}{\lambda_p}p_i)
\end{equation}
where $\lambda_r,\small \lambda_p \in \Lambda$ are exchange parameters defined in section \ref{sec:problem formulation}.

Accordingly, truncated difference $\widetilde{\mathrm{diff_i}}'$ after exchanging procedure can be calculated by \eqref{eq:truncated difference i} with a new 0-1 truncating vector $\overrightarrow{TV_1}$ based on situation on each dimension. Hence, new $\widetilde{\mathrm{dis}}'$ after exchanging objects can be calculated with $\widetilde{\mathrm{diff_i}}'$ by applying \eqref{eq:twEuclidDis}.

\subsection{RTC* Method}\label{section:RTC* method model}
Before we propose the RTC* method, we define a virtual object as follows:
\begin{definition}(Virtual Object)
Define a virtual object $V(v_1,v_2,..v_d)$ which could make $C$ has the same value of $T$ on each weak dimension after exchanging with object $R$ in $C$. Thus, value of virtual object on dimension i is:
\begin{equation}\label{eq:vp estimate}
v_i =\frac{\mathrm{diff_i}+r_i}{\lambda_r}\times \overrightarrow{TV_2}(i)
\end{equation}
where $\overrightarrow{TV_2}$ is new truncating vector for virtual object and $r_i$ is the value of swap-out object $R$ on dimension i.
\end{definition}
\begin{corollary}\label{cor:distance cor}
Assume $\forall w_i \in W(w_1,w_2,...,w_d),\small w_i>0$, denote the truncated distance between objects as $\widetilde{\mathrm{oDis}}$. The nearest neighbours of virtual objects measured by $\widetilde{\mathrm{oDis}}$ is the top-ranked ones who can make $C$ become closer to $T$.
\end{corollary}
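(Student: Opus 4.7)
My plan is to prove the corollary by establishing a monotone, rank-preserving relation between the candidate swap-in distance $\widetilde{\mathrm{oDis}}(V,P)$ and the post-exchange team-to-target distance $\widetilde{\mathrm{dis}}'$. Once such a relation holds, minimising $\widetilde{\mathrm{oDis}}(V,P)$ over $P\in\mathbb{O}$ will automatically minimise $\widetilde{\mathrm{dis}}'$, so the nearest neighbours of $V$ are by construction the top-ranked swap-in candidates, which is exactly the claim.

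The key algebraic step is to substitute the closed form for $v_i$ from \eqref{eq:vp estimate} into the new difference $\mathrm{diff_i}'$ given by \eqref{eq:new diff after exchange}. On every weak dimension of $C$ this substitution replaces $\mathrm{diff_i}+r_i$ by $\lambda_r v_i$, yielding
\[
\mathrm{diff_i}' \;=\; \lambda_r v_i - \frac{\lambda_r}{\lambda_p}\,p_i \;=\; \frac{\lambda_r}{\lambda_p}\bigl(\lambda_p v_i - p_i\bigr),
\]
so that plugging this into \eqref{eq:truncated difference i} and then \eqref{eq:twEuclidDis} with the post-exchange truncating vector $\overrightarrow{TV_1}$ rewrites $\widetilde{\mathrm{dis}}'$ as the same $w_i$-weighted truncated $\ell_2$ form that defines $\widetilde{\mathrm{oDis}}(V,P)$, up to a positive constant. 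Because every $w_i>0$, the resulting map $P\mapsto \widetilde{\mathrm{dis}}'$ is a strictly monotone function of $\widetilde{\mathrm{oDis}}(V,\cdot)$, which delivers the desired rank equivalence.

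The hard part will be reconciling the two truncating vectors $\overrightarrow{TV_2}$ (used to define $V$) and $\overrightarrow{TV_1}$ (used after the exchange), because the set of weak dimensions can shift once a specific $P$ is inserted. I would dispatch this by a per-coordinate case analysis: on dimensions weak under both vectors the algebra above applies verbatim; on a strong dimension of $C$ where $\overrightarrow{TV_2}(i)=0$ forces $v_i=0$, any additional gain from $p_i$ only strengthens $C$ further and is correctly clipped out of $\widetilde{\mathrm{dis}}'$; and on a dimension that becomes strong only after inserting $P$, the truncated difference is again clipped to zero, which matches the zero contribution on the NN side. A secondary subtlety is the $\lambda_r/\lambda_p$ prefactor, which varies across candidates if the exchange parameters are not uniform; I would absorb $\lambda_p$ into the coordinates of $V$ (equivalently, normalise the database coordinates by $\lambda_p$) before issuing the NN query, so that the Euclidean NN ranking in the indexed space still coincides with the team-distance ranking.
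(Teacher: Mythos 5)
Your proposal follows essentially the same route as the paper's own proof: substitute the definition of $v_i$ from \eqref{eq:vp estimate} into the post-exchange difference \eqref{eq:new diff after exchange} so that the post-exchange truncated distance collapses to the truncated object distance $\widetilde{\mathrm{oDis}}(V,P)$ (the paper asserts this as an exact equality after implicitly normalising by $\lambda_r$ and taking $\lambda_p P\in\mathbb{O}$, which is the same normalisation you propose), and then use $w_i>0$ to conclude the NN ranking of $V$ coincides with the ranking by post-exchange distance. Your per-coordinate reconciliation of the truncating vectors and your handling of the $\lambda_r/\lambda_p$ prefactor simply make explicit what the paper leaves implicit through its vector $\overrightarrow{TV_3}$, so the two arguments are essentially identical.
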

\begin{proof}
Suppose we can find a nearest neighbour $P$ of $V$, $\lambda_pP \in \mathbb{O}$, truncated difference between $V$ and $P$ is represented using $\widetilde{\Delta}$ where $\widetilde{\Delta}(i)$ is the truncated value on dimension \textit{i}. Thus, $\widetilde{\mathrm{diff}}'$ is:
\begin{equation}\label{eq:truncated difference of vp}
\widetilde{\mathrm{diff_i}}'= (\frac{\mathrm{diff_i}+r_i}{\lambda_r}-p_i)\times \overrightarrow{TV_3}(i)
\end{equation}
where $\overrightarrow{TV_3}$ is truncating vector and $\overrightarrow{TV_3}(i)=0$ $\mathit{iff}$ $ \widetilde{\mathrm{\Delta}(i)}<0$, so $\widetilde{\mathrm{dis}}'$ can be represented as:
\begin{equation}\label{eq:prove same}
\widetilde{\mathrm{dis}}'= \sqrt{\sum_{i=1}^{d}{(w_i\widetilde{\mathrm{diff_i}}')^2}} 
\end{equation}
Notice that $\forall w_i \in W(w_1,w_2,...,w_d),\small w_i>0$, \eqref{eq:prove same} also can be represented as:
\begin{equation}\label{eq:prove same vi pi}
\begin{split}
\widetilde{\mathrm{dis}}' 
&= \sqrt{\sum_{i=1}^{d}{(w_i((v_i-p_i)\overrightarrow{TV_3}(i))^2}} \\
&= \sqrt{\sum_{i=1}^{d}{(w_i(\widetilde{\Delta}(i) \times \overrightarrow{TV_3}(i))^2}} \\
&= \widetilde{\mathrm{oDis}}
\end{split}
\end{equation}
\qed
\end{proof}

So our problem of ranking objects from perspective of team context can be mapped into object space. Which is, by considering  nearest neighbours of virtual objects under current team context, we can obtain top-ranked tuples.

We can index the truncated distance between $\forall O_i \in \mathbb{O}$ and the virtual object for convenience of searching:
\begin{algorithm}
\caption{RTC* Method}\label{algo:RTC* method}
\KwIn{current context $C$,Target Context $T$,$\mathbb{O}$}
\KwOut{$<R_i,P_i>$}
\BlankLine
\lForEach{$R_i \in C$}{Calculate $V_i$ using \eqref{eq:vp estimate}}\;
\lForEach{$P_i \in \mathbb{O}$}{Index $\widetilde{\mathrm{oDis}}$ between $P_i$ and $V_i$}\;
find $<R_i,P_i>$ with $\mathrm{Min}(\widetilde{\mathrm{oDis}})$\;
return $<R_i,P_i>$\;
\end{algorithm}

As presented in Algorithm \ref{algo:RTC* method}, we first calculate the virtual object based on current team context and index the $\widetilde{\mathrm{oDis}}$ between $O \in \mathbb{O}$ and virtual object in iDistance presented in \cite{DBLP:journals/infsof/YuCWS07} for processing the query.

It is easy to make the generalization that the query time is only related to the cardinality of our current context $C$, so RTC* will show high performance and good scalability on very large datasets.

\section{Experiments}\label{sec:Experiment}
\subsection{Experiment Setup}
All the experiments were performed on machine with Intel Core(TM) i3 CPU and 4 GB RAM hosted on 32 bit Windows 7.
\paragraph{Datasets}
We perform our experiments on both real and synthetic data. Real dataset is obtained from \cite{website:basketball-reference} which consists total statistic data of NBA regular season from 2011 to 2012.  Real dataset contains 400 players with 24 attributes in total and and 30 teams described by 20 dimensions. Size of player dataset is 39.5KB and team dataset is 20KB.

Attributes which can discriminate between season-long successful and unsuccessful basketball teams according to researches on basketball in \cite{doi:10.1080/17461390802261470,oliver2004basketball} are FG, 3P, 3PA, BLK, FT, STL, FTA, PTS, AST, DRB and TRB. We use this attribute set for our experiments as well.

Synthetic dataset are generated based on the features of real dataset with total $1.07\times10^6$ records and 69MB size. Feature of partial dimensions is illustrated in Fig.\ref{fig:Distributed Fitting Curve}. We make hypothesis $H_0$ that values of dimensions listed in \ref{Table:distribted parameters} has negative binomial distribution and do distribution fitting accordingly.

\begin{figure} 
  \centering
  \subfigure[Player PTS]{
    \label{fig:subfig:b} 
    \includegraphics[width=2in,height=1.5in]{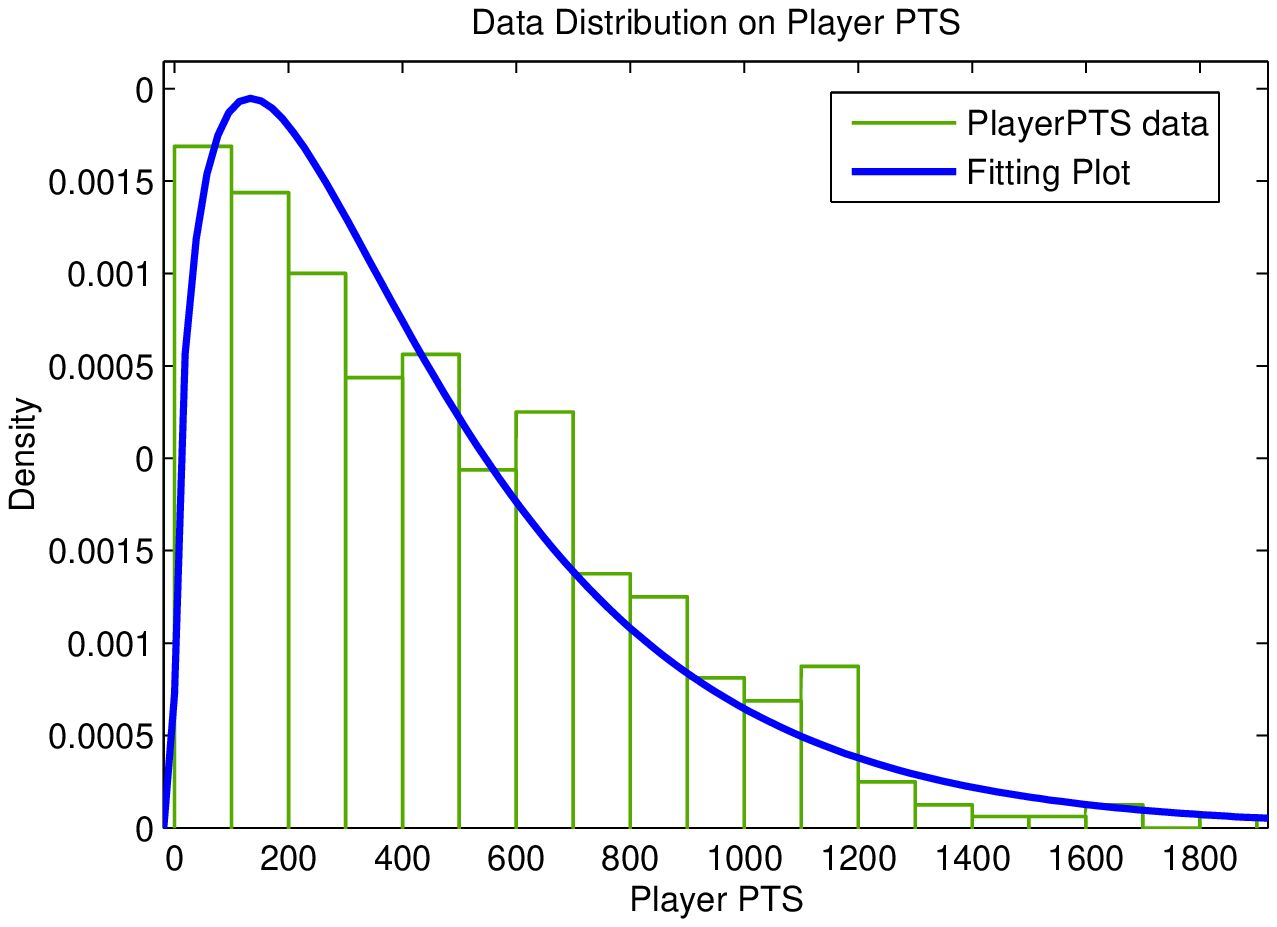}}
  \subfigure[Player STL]{
    \label{fig:subfig:c} 
    \includegraphics[width=2in,height=1.5in]{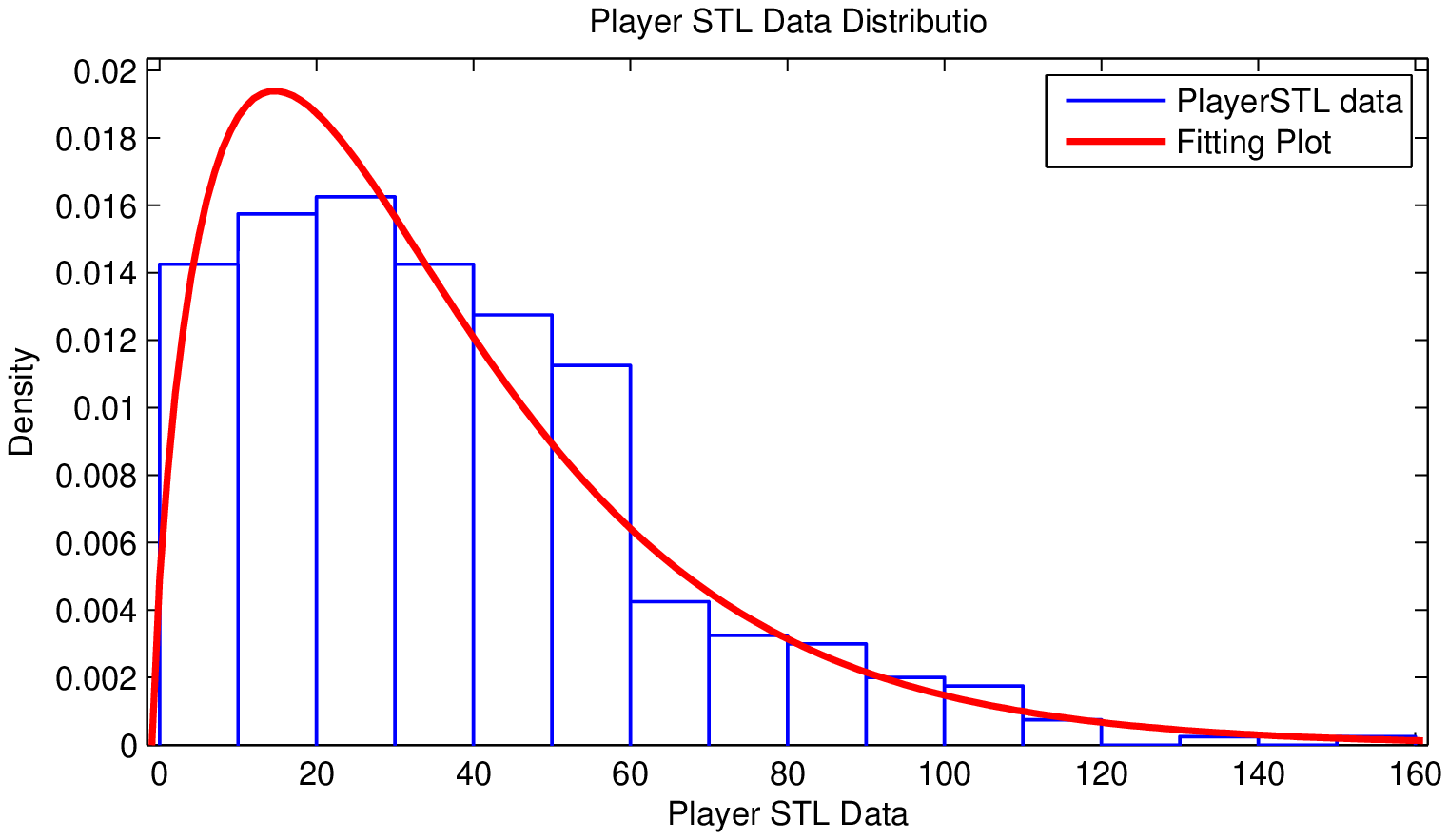}}
  \caption{Partial Review of Data Distribution}\label{fig:Distributed Fitting Curve}
\end{figure}

We test the hypothesis using Chi-square goodness-of-fit with parameters estimated in Table \ref{Table:distribted parameters}. $H_0$ is accepted at 95\% significance level. Synthetic data are generated based on the fitted distribution.

 \begin{table}
 \centering
\caption{Estimated Parameters of Data Distribution}
\begin{tabular}{ccccccccc}\toprule
Dimension& r &p &Dimension &r & p&Dimension &r & p\\ \midrule
FG &1.44& 0.008 &TRB  & 1.62 & 0.008&BLK&0.91& 0.004\\
DRB& 1.67 & 0.01 &FT  & 1.07 & 0.013&STL&1.70 &0.045 \\
FTA& 1.16 & 0.01 &PTS & 1.40 & 0.003&AST&0.93 & 0.0092 \\
\bottomrule\end{tabular}
\label{Table:distribted parameters}
\end{table}

Because each dimension contributes differently in teams' final rankings, we adopt Kendall's $\tau$, which is a method measuring the associations between attributes \cite{kendall1938new}, to calculate weight of each dimension. Results are listed in Table \ref{Table:weight}.
\begin{table}
\centering
\caption{Weight for Each Dimension}
\begin{tabular}{clclclcl}\toprule
Dimension&Weight&Dimension&Weight&Dimension&Weight&Dimension&Weight\\ \midrule
DRB&0.35&FG&0.2695&3P&0.30&AST&0.24\\
3PA&0.20&FT&0.2576&TRB&0.1884&STL&0.38\\
FTA&0.27&PTS&0.4060&BLK&0.24\\
\bottomrule\end{tabular}
\label{Table:weight}
\end{table}
\subsection{Experiments Implementations}\label{sec:Experiments Implementations}
First, we select the target team on real dataset before exchanging players. Results are shown in Table \ref{Table:Target Context-r} with the initial truncated distance between $C$ and $T$.
\begin{table}
\centering
\caption{Target Context of Each Team}
\begin{tabular}{ccccccccc} \toprule
$C$ & $T$ & Distance & $C$ & $T$ & Distance & $C$ & $T$ & Distance \\ \midrule
DEN &BOS &6.1096   &PHI& BOS &29.5286&UTA &MEM &13.6334 \\
ORL &BOS & 51.2774 &HOU& ATL &31.2126&DAL &ATL &28.4675 \\
NYK &MEM & 23.6467 &PHO& BOS &18.3673&MIL& MEM &19.3955\\
POR& LAC &31.0965\\
\bottomrule\end{tabular}
\label{Table:Target Context-r}
\end{table}
\subsubsection{A Brute Force Method}\label{sec:brute force}
Brute force method is performed for each "mid-class"(teams ranked 11st$\scriptsize{\sim}$20th based on game winning in season 2011$\scriptsize{\sim}$2012) team on real dataset as a baseline method. 
\begin{table}
\centering
\caption{Selection For HOU}
\begin{tabular}{llc}\toprule
Roster & Candidate & New Distance\\ \midrule
Luis Scola& Josh Smith&0\\
Patrick Patterson& LeBron James  &0\\
\bottomrule\end{tabular}
\label{Table:Object Selection Using BF}
\end{table}
Take results of HOU listed in Table \ref{Table:Object Selection Using BF} as an example. There are two pairs of players could be found for making this team approach its target measured by truncated distance. Either pair can be selected to make chance for this team to enter into play-offs.

\begin{figure}
  \centering
  \includegraphics[scale=0.5]{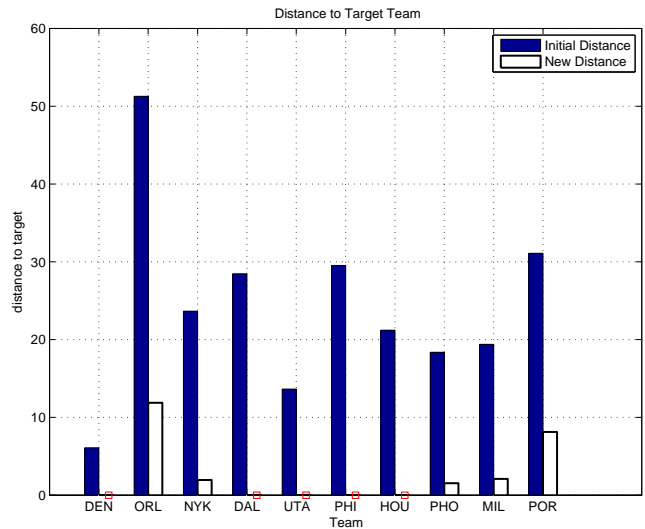}
  \caption{Comparison of Team's Distance to Target}
  \label{fig:new distance}
\end{figure}

Fig.\ref{fig:new distance} depicts between each "mid-class" team and its target before and after exchanging players. We can observe that many teams will have same value as its targets on their weak dimensions, which are illustrated in square.

\subsubsection{RTC* Method}
We also test RTC* method on the real dataset.  In estimating virtual players using \eqref{eq:vp estimate}, minutes played serves as the exchange parameter $\lambda_r$. 

Also consider HOU(Houston Rockets) as an example. Values on dimension FG,3P,3PA,FT and FTA  of virtual player and corresponding swap-in player listed in Table \ref{Table:Object Selection Using RTC*} for explanation:

\begin{table}
\caption{Overview of Virtual Player and Candidate}
\label{Table:Object Selection Using RTC*}
\centering
\begin{tabular}{l|c|c|c|c|c|c}
\toprule
\multirow{2}{*}{Name}  &
\multicolumn{5}{|c|}{Attributes} &
\multirow{2}{*}{ $\widetilde{\mathrm{oDis}}$} \\ 
\cline{2-6}
 {}& FG & 3P & 3PA & FT & FTA  &{} \\
\hline
 Josh Smith&0.22&0.01&0.05&0.09&0.14&
 \multirow{2}{*}{0} \\
 Virtual Player of Josh&0.18&0.01&0.00&0.08&0.14\\ \hline
 LeBron James&0.27&0.02&0.06&0.17&0.22&
 \multirow{2}{*}{0} \\
 Virtual Player of LeBron&0.11&0.01&0.00&0.03&0.10\\
\bottomrule
\end{tabular} 
\end{table}

As listed in Table \ref{Table:Object Selection Using RTC*}, both selected players from player space are better than corresponding virtual players on those dimensions. According to definition of truncated distance, the $\widetilde{\mathrm{oDis}}$ calculated between Josh Smith and its corresponding virtual player, or LeBron James and its virtual player listed, are 0. Therefore, those two pairs are selected as the result which is same as the ones selected using brute force listed in Table \ref{Table:Object Selection Using BF}. 

Notice that both results are the nearest neighbours of corresponding virtual player measured by $\widetilde{\mathrm{oDis}}$, which further proves the rationale behind Corollary \ref{cor:distance cor}.

\subsection{Result Analysis}In this section, we mainly focus on analysing the results of brute force method and RTC* on efficiency and scalability.

We fixed our block size as 100 records per block for real dataset and 10 per block for synthetic data.
\begin{figure}
  \centering
  \subfigure[I/O Test on real data]{
    \label{fig:subfig:BFNNIO} 
    \includegraphics[width=2.2in,height=1.8in]{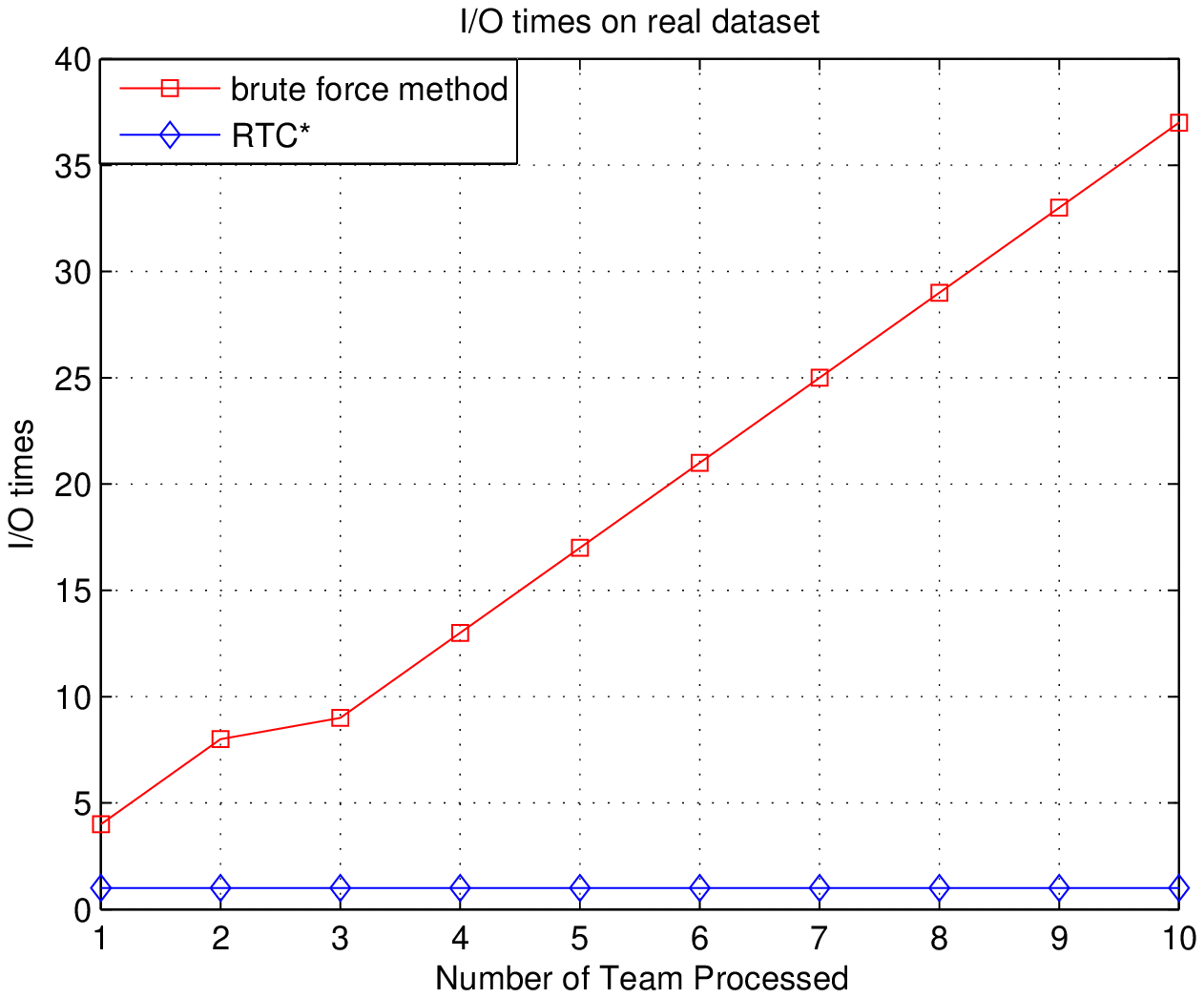}}
  \subfigure[I/O Test on synthetic data]{
    \label{fig:subfig:BFNNIO_Syn} 
    \includegraphics[width=2.2in,height=1.8in]{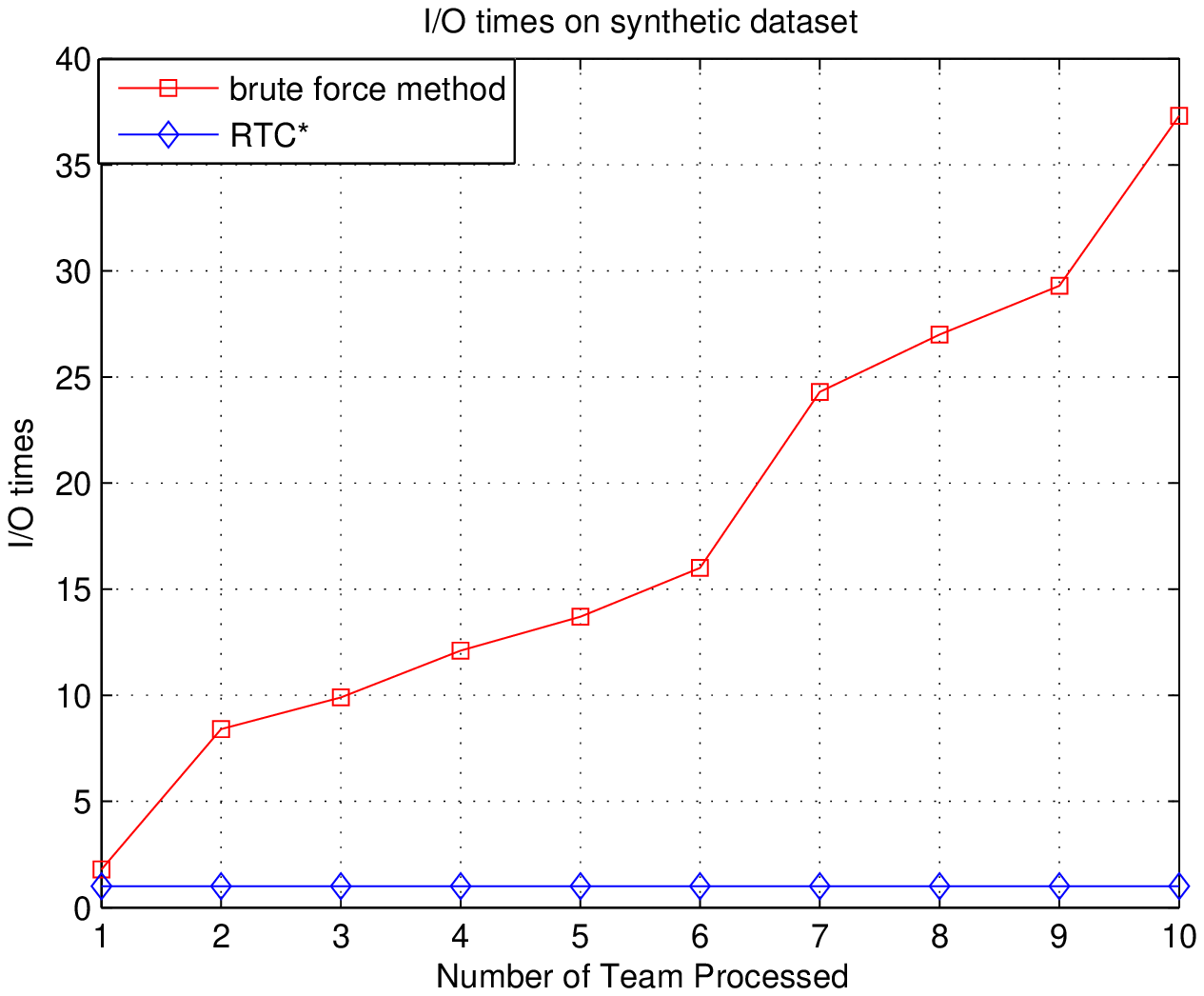}}
  \caption{I/O Performance Of Different Method }
  \label{fig:I/O Testing}
\end{figure}

%
Depicted in Fig.\ref{fig:I/O Testing}, regardless of the data size and block size, I/O will be performed only once using RTC* as long as we had set up virtual player index, while brute force method varies depending on TC, which shows less robustness. 

Like I/O testing, we tested time cost of selection method both on real and synthetic data, illustrated in Fig.\ref{fig:TimeTesting}.
\begin{figure}
  \centering
  \subfigure[Time Cost on Real Data]{
    \label{fig:subfig:d} 
    \includegraphics[width=2.2in,height=1.8in]{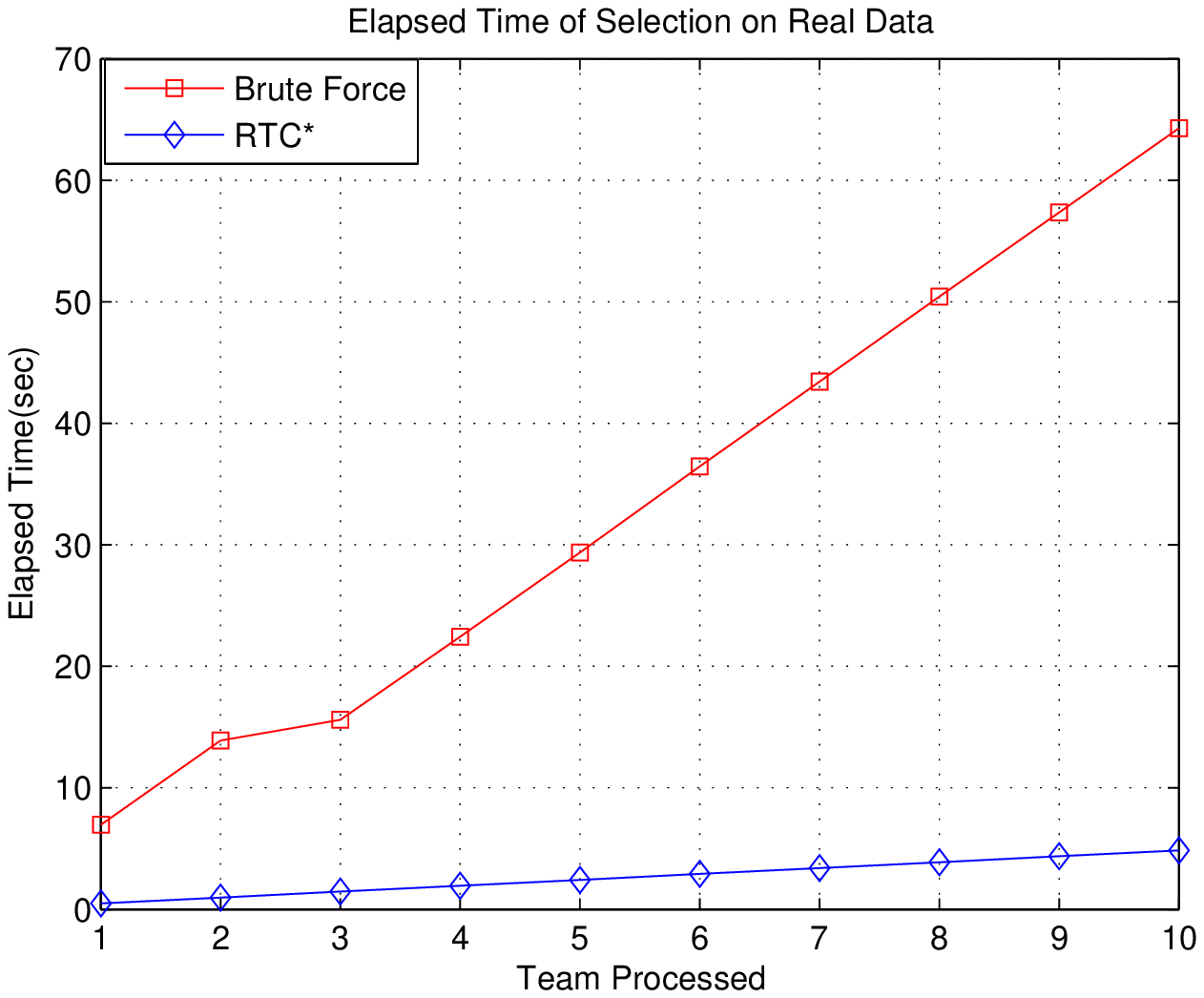}}
  \subfigure[Time Cost on Synthetic]{
    \label{fig:subfig:e} 
    \includegraphics[width=2.2in,height=1.8in]{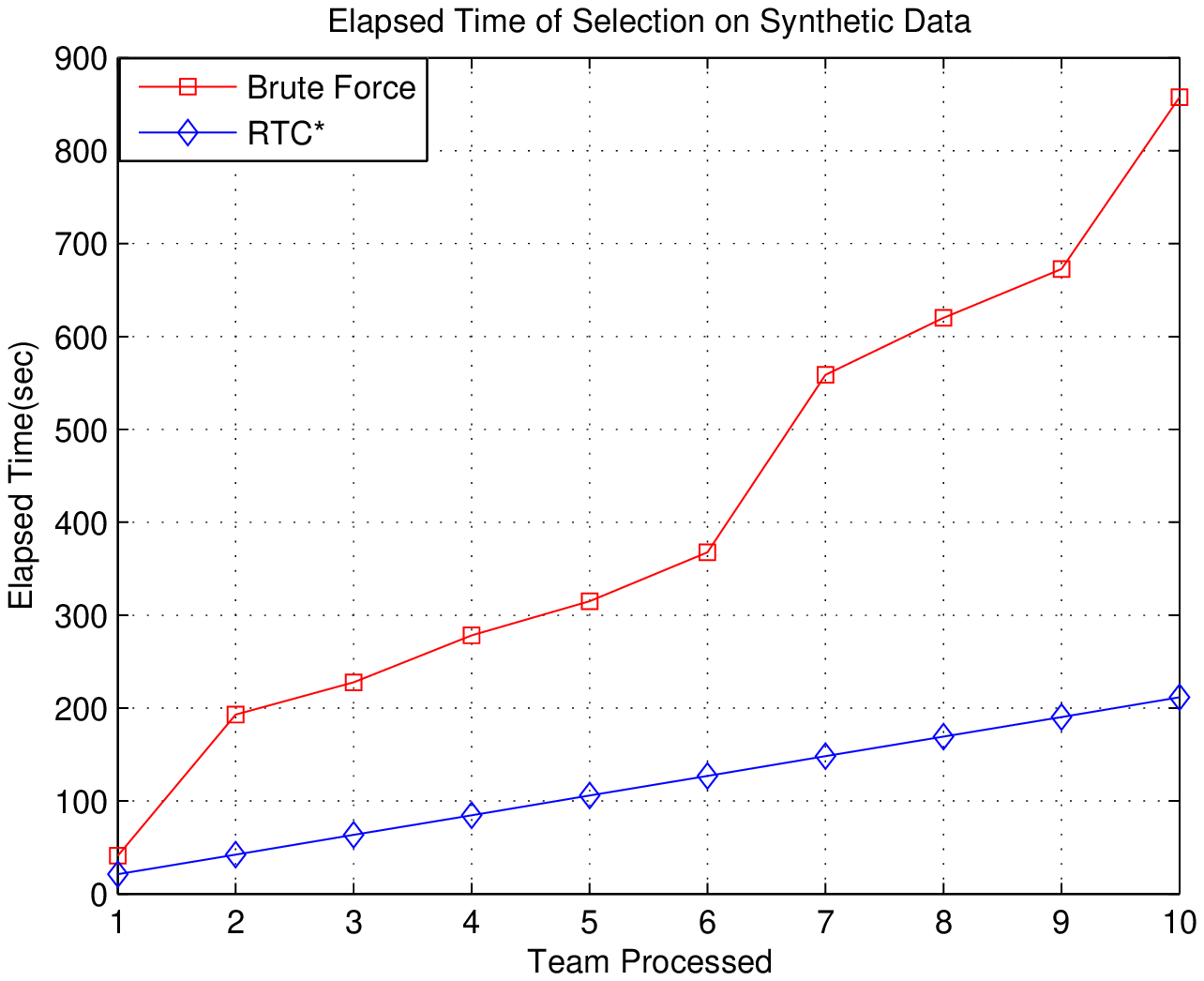}}
  \caption{Time Cost Of Different Method }
  \label{fig:TimeTesting}
\end{figure}

It is easy to generalize that the time cost of RTC* grows slowly with a constant rate while brute force increase very fast. 

Differs from brute force method which highly depends on the value of team context, RTC* has good robustness and better performance regardless of the data value.

\section{Conclusion and Future Work} \label{sec:ConclusionLB}
In this paper, we introduce the problem of object selection under team context. This problem is quite practical in many scenarios of selecting objects to improve the team or organization's competence. We propose the brute force algorithm RTC for the problem. Furthermore, we propose an I/O efficient algorithm RTC* based on NN-indexing with a proof that its output is equivalent to RTC. Extensive experiments are conducted on both synthetic and real datasets to demonstrate the effectiveness and efficiency of our algorithms.

We would like to extend our work from two directions in our future work. First, due to the fact that the probabilistic database tuples are not uncommon, we plan to do probabilistic object selection.  Second, query the database objects based on the teams' temporal contexts.
\bibliographystyle{abbrv}
\bibliography{draft_1401}
\end{document}